  \newcommand{\defproblem}[3]{
  \vspace{2mm}
\noindent\fbox{
  \begin{minipage}{0.96\textwidth}
  #1\\
  {\bf{Input:}} #2  \\
  {\bf{Output:}} #3
  \end{minipage}
  }
  \vspace{2mm}
}
   \newenvironment{myfunction}[2][htbp]
  {%
    \setlength{\algomargin}{.2cm}
    \begin{center}
    \begin{minipage}{#2}
    \begin{function}[#1]
    \small
     \let\Par=\par
       \def\par{\endgraf\vspace{.1cm}}
           \SetKw{To}{to}%
       \SetKw{Downto}{downto}%
           \SetKw{Or}{or}%
       \SetKwFor{Algo}{Function}{}{}%
      \vspace{.15cm}%
   }
   {%
     \let\par=\Par
     \end{function}%
     \end{minipage}%
     \end{center}%
   }
\newcommand{\ACSM}{\textsc{ApproximateCircularStringMatching}}
\def\dd{\mathinner{.\,.}}
\newcommand{\cO}{\mathcal{O}}
\begin{document}
\frontmatter          
%
%
\title{Average-case Optimal Approximate Circular String Matching}

\author{Carl Barton\inst{1}
\and Costas S.\ Iliopoulos\inst{1,2}
\and Solon P.\ Pissis\inst{1}\thanks{Supported by a London Mathematical Society grant (no. 51303).}
}

\institute{$\!^1\ $Department of Informatics, King's College London, The Strand, London, UK \\ 
\email{\{carl.barton,costas.iliopoulos,solon.pissis\}@kcl.ac.uk} \\ 
$\!^2\ $Department of Mathematics \& Statistics,
University of Western Australia, 35 Stirling Highway, Perth, Australia \\ }

\titlerunning{Average-Case Optimal Approximate Circular String Matching}
\authorrunning{C. Barton, C. S. Iliopoulos, and S. P. Pissis}

\toctitle{Average-Case Optimal Approximate Circular String Matching}
\tocauthor{Carl~Barton, Solon~P.~Pissis and Costas~S.~Iliopoulos}
\maketitle

\begin{abstract}
Approximate string matching is the problem of finding all factors of a text $t$ of length $n$
that are at a distance at most $k$ from a pattern $x$ of length $m$. 
Approximate circular string matching is the problem of finding all factors of $t$
that are at a distance at most $k$ from $x$ {\em or} from any of its rotations. 
In this article, we present a new algorithm for approximate circular string matching under the edit distance model with optimal average-case search time $\cO(n(k + \log m) /m)$.
Optimal average-case search time can also be achieved by the algorithms for multiple approximate string matching (Fredriksson and Navarro, 2004) using
$x$ and its rotations as the set of multiple patterns. Here we reduce the preprocessing time and space requirements compared to that approach.
\keywords{algorithms on automata and words, average-case complexity, average-case optimal, approximate string matching}
\end{abstract}

\section{Introduction}
In order to provide an overview of our results and algorithms,
we begin with a few definitions, generally following~\cite{CHL07}.
We think of a \textit{string} $x$ of \textit{length} $n$ as an array
$x[0\dd n-1]$, where every $x[i]$, $0 \le i < n$, is a \textit{letter}
drawn from some fixed \textit{alphabet} $\Sigma$ of size $\sigma = \cO(1)$.
By a {\em $q$-gram} we refer to any string $x \in \Sigma^q$.
The \textit{empty string} of length $0$ is denoted by $\varepsilon$.
A string $x$ is a \textit{factor} of a string $y$ if there exist two strings $u$ and $v$, such that $y=uxv$.
Consider the strings $x,y,u$, and $v$, such that $y=uxv$. If $u=\varepsilon$, 
then $x$ is a \textit{prefix} of $y$. If $v=\varepsilon$, then $x$ is a \textit{suffix} of $y$.
Let $x$ be a non-empty string of length $n$ and $y$ be a string. 
We say that there exists an \textit{occurrence} of $x$ in $y$, or, more simply, that $x$
\textit{occurs in} $y$, when $x$ is a factor of $y$.
Every occurrence of $x$ can be characterised by a position in $y$. Thus we say that $x$ occurs at the
\textit{starting position} $i$ in $y$ when $y[i \dd i + n - 1]=x$.
Given a string $x$ of length $m$ and a string $y$ of length $n \geq m$, the \emph{edit distance}, 
denoted by $\delta_E(x,y)$, is defined as the minimum total cost of operations 
required to transform one string into the other. For simplicity, we only count the number of edit operations, 
considering the cost of each to be $1$~\cite{levelshtein-66-binary}.
The allowed edit operations are as follows:
\begin{itemize}
	\item \emph{Insertion}: insert a letter in $y$, not present in $x$; $(\varepsilon, b),~b \neq \varepsilon$
	\item \emph{Deletion}: delete a letter in $y$, present in $x$; $(a, \varepsilon),~a \neq \varepsilon$ 
	\item \emph{Substitution}: replace a letter in $y$ with a letter in $x$; $(a,b),~a \neq b,\texttt{and}~a,b \neq \varepsilon$. 
\end{itemize}

\noindent We write $x \equiv_k^E y$ if the edit distance between $x$
and $y$ is at most $k$. Equivalently, if $x \equiv_k^E y$, we say that
$x$ and $y$ have at most $k$ {\em differences}. 
We refer to the \textit{standard dynamic programming matrix} of $x$ and $y$ as the matrix defined by

\noindent$\textsf{D}[i,0] = i,\textrm{ }0\leq i\leq m,\textrm{ }\textsf{D}[0,j] = j,\textrm{ }0\leq j\leq n$
\[ \textsf{D}[i,j] =\min \left\{
  \begin{array}{l}
    \textsf{D}[i-1,j-1]  + (1 \textsf{ if } x[i-1]\neq y[j-1])\\
\textsf{D}[i-1,j] + 1 \\
\textsf{D}[i,j-1] + 1 
  \end{array} \right. , 1\leq i\leq m, 1\leq j\leq n.\]

\noindent Similarly we refer to the \textit{standard dynamic programming algorithm} as the algorithm to compute the 
edit distance between $x$ and $y$ through the above recurrence in time $\cO(mn)$. 
Given a non-negative integer threshold $k$ for the edit distance, this can be computed in time $\cO(mk)$~\cite{editd}.
We say that there exists an \textit{occurrence} of $x$ in $y$ with at most $k$ differences, or, more simply, that $x$
\textit{occurs in} $y$ with at most $k$ differences, when $u \equiv_k^E x$ and $u$ is a factor of $y$.

A circular string of length $n$ can be viewed as a traditional linear string which has the left- and right-most symbols 
wrapped around and stuck together in some way. Under this notion, the same circular string can be seen as $n$ different 
linear strings, which would all be considered equivalent. Given a string $x$ of length $n$, we denote 
by $x^{i}=x[i \dd n-1]x[0 \dd i-1]$, $0 < i < n$, the $i$-th \textit{rotation} of $x$ and $x^{0}=x$.
Consider, for instance, the string $x=x^{0}=\texttt{abababbc}$; this string has the following rotations:
$x^{1}=\texttt{bababbca}$, $x^{2}=\texttt{ababbcab}$, $x^{3}=\texttt{babbcaba}$, $x^{4}=\texttt{abbcabab}$, 
$x^{5}=\texttt{bbcababa}$, $x^{6}=\texttt{bcababab}$, $x^{7}=\texttt{cabababb}$.

This type of structure occurs in the DNA of viruses, bacteria, 
eukaryotic cells, and archaea. In~\cite{G97}, it was noted that, due to this, algorithms on 
circular strings may be important in the analysis of organisms with such structure. For instance, circular strings have been studied before 
in the context of sequence alignment. In~\cite{Lee:2010:FOA:1875737.1875765,circ09}, algorithms for multiple circular 
sequence alignment were presented. 
Here we consider the problem of finding occurrences of a pattern $x$ of length $m$ with circular structure 
in a text $t$ of length $n$ with linear structure. 
This is the problem of \emph{circular string matching}.

The problem of exact circular string matching has been considered in~\cite{Lot05}, where an $\cO(n)$-time algorithm was presented. 
The approach presented in~\cite{Lot05} consists of preprocessing $x$ by constructing 
a \emph{suffix automaton} of the string $xx$, by noting that every rotation of $x$ is a factor of $xx$. 
Then, by feeding $t$ into the automaton, the lengths of the longest factors of $xx$ occurring in $t$ can be found 
by the links followed in the automaton in time $\cO(n)$. 
In~\cite{Grabowski}, an average-case optimal algorithm for exact circular string matching was presented and it was also shown that the average-case lower bound for single string matching of $\Omega(n \log_\sigma m/m)$ also holds 
for circular string matching. Very recently, in~\cite{Chen03032013}, the authors presented two fast average-case algorithms 
based on word-level parallelism. The first algorithm requires average-case time $\cO(n \log_{\sigma}m/w)$, where $w$ is the 
number of bits in the computer word. The second one is based on a mixture of word-level parallelism and $q$-grams. 
The authors showed that with the addition of $q$-grams, and by setting $q = \Theta(\log_\sigma m)$, an average-case optimal time 
of $\cO(n \log_{\sigma}m/m)$ is achieved.  
Indexing circular patterns~\cite{Iliopoulos:2008:ICP:1787651.1787658} based on the construction of \emph{suffix tree}---have also been considered. 

The aforementioned algorithms for the exact case have the disadvantage that
they cannot be applied in a biological context since single nucleotide polymorphisms and errors introduced 
by wet-lab sequencing platforms might have occurred in the sequences; also it is not clear whether they could easily be adapted to deal with the approximate case. 
For the rest of the article, we assume that each position in the text $t$ is uniformly randomly drawn from $\Sigma$, 
and consider the following problem.

\defproblem{\ACSM}{a pattern $x$ of length $m$, a text $t$ of length $n>m$, and an integer threshold $k<m$}{all factors $u$ of $t$ such that $u \equiv_k^E x^i$, $0 \leq i < m$}

Similar to the exact case~\cite{Grabowski}, it can be shown that the average-case lower bound for single 
approximate string matching of $\Omega(n(k + \log_\sigma m) /m)$~\cite{Chang} also holds for approximate circular string matching 
under the edit distance model.
Recently, we have presented average-case $\cO(n)$-time algorithms for approximate circular string matching which are also very efficient in practice~\cite{1748-7188-9-9}. 
In~\cite{aproxcir}, an algorithm with $\cO(\frac{nk\log m}{m})$ average-case search time was presented.
To achieve average-case optimality, one could use the algorithms for multiple approximate string matching, 
presented in~\cite{Fredriksson:2004:ASM:1005813.1041513}, for matching the $r=m$ rotations of $x$ with
$\cO(n(k + \log_\sigma rm) /m)$ average-case search time, only if $k/m<1/2-\cO(1/\sqrt{\sigma})$ and $r= \cO(\min(n^{1/3} /m^2, \sigma^{o(m)}))$. 
Therefore the focus of this article is on a more {\em direct} algorithm which also improves on the preprocessing time and space complexity.

\textbf{Our Contribution.} 
In this article, we present a new average-case optimal algorithm for approximate circular string matching, under the edit distance model, that reduces 
the preprocessing time and space requirements compared to previous algorithms with optimal average-case search time. These savings are around $\cO(m^2)$ or more in all cases.
%
\section{Algorithm}
\label{sec:algo}
In this section, we present our algorithm for approximate circular string matching under the edit distance model. 
The presented algorithm consists of two distinct schemes: the {\em searching} scheme, which determines if the currently considered text window potentially has a valid occurrence; in case the window \textit{may} contain a valid occurrence, we are required to check the window for valid occurrences of the pattern or any of its rotations; this is done through the {\em verification} scheme.

Intuitively, the algorithm considers a {\em sliding window} of length $m-k$ of the text, and reads $q$-grams backwards from the end of the window until it is likely to have found enough {\em differences} to skip the entire window. That is, we wish to make the probability of a verification being triggered sufficiently unlikely whilst also ensuring we can shift the window a reasonable amount.

The rest of this section is structured as follows. We first present an efficient incremental string comparison technique which forms the basis of the verification scheme. 
We then present the searching scheme of our algorithm which requires a preprocessing step. In fact, this preprocessing step is similar to the verification scheme. Finally, we show how plugging these schemes together results in a new average-case optimal algorithm for approximate circular string matching.

\subsection{Verification scheme}

The verification scheme of our algorithm is based on incremental string comparison techniques. 
First we give an introduction to these techniques; and then explain how we use them in the verification scheme.
The incremental string comparison problem was introduced in the pioneering work of Landau {\em et al}~\cite{LMS98}. 
The authors considered the following problem: given the edit distance between two strings $\textsf{A}$ and $\textsf{B}$, 
how can the edit distance between $\textsf{A}$ and $\texttt{b}\textsf{B}$ or $\textsf{B}\texttt{b}$ be efficiently derived, where $\texttt{b}$ is an additional letter. 
Given a threshold on the number of differences $k$, they solve this problem and allow prepending and appending of letters in time $\cO(k)$ per operation. 
Later the authors of~\cite{Hsu:2009:FAG:1696924.1697033} considered a generalisation of this problem with the aim of computing all maximal gapped palindromes in a string. 
The problem considered is a generalisation of the incremental string comparison problem considered in~\cite{LMS98} as it considers how to efficiently derive the edit distance when prefixes are deleted and letters are prepended to $\textsf{A}$ or $\textsf{B}$. The solution proposed in~\cite{Hsu:2009:FAG:1696924.1697033} also has a 
time complexity of $\cO(k)$ per operation. 
The solution for the generalised incremental string comparison problem forms the basis of our verification step. The technique lends itself more naturally to circular string matching due to the increased flexibility it provides. We begin by recalling some of the main results from~\cite{Hsu:2009:FAG:1696924.1697033} required for our algorithm.

The main idea in both~\cite{LMS98} and~\cite{Hsu:2009:FAG:1696924.1697033} is the efficient computation of the so-called \textit{$h$-waves}. 
In the standard dynamic programming matrix for two strings $x$ and $y$, we say that a cell $\textsf{D}[i,j]$ is on the diagonal $d$ {\em iff} $j-i=d$. For each diagonal, we may have a lowest cell with value $h$; if $\textsf{D}[i,j] = h$ and $\textsf{D}[i+1,j+1] = h+1$ then $\textsf{D}[i,j]$ is this cell for diagonal $j-i$. The $h$-wave, for all $0 \leq h \leq k$, is the position of all these cells across all diagonals, that is, a list $\textsf{H}_h$ of length $\cO(k)$, where each entry is a pair $(i,j)$ such that $\textsf{D}[i,j] = h$ and $\textsf{D}[i+1,j+1] = h+1$. Note that the $i$-th wave can only contain entries on diagonal zero and the $i$ diagonals either side of it, so for $0 \leq i \leq k$ every wave has size $\cO(k)$. 
Both incremental string comparison techniques show some bounds on the possible values of the cells on $h$-waves and how to 
efficiently compute them. 
These $h$-waves define the entire dynamic programming matrix due to the monotonicity properties of the matrix. For any diagonal $d$, if we know the position of the lowest cell on $d$ with 
value $h$ and $h+1$, then we also know the value of every cell between these two cells: it must be $h+1$. So given the $h$-waves of the matrix, for all $0 \leq h \leq k$, we have all the information that is in the standard dynamic programming matrix. The key result from our perspective is the following.

Let $\textsf{cat}(u', u)$ denote the string obtained by 
concatenating $u'$ and $u$, where $u,u' \in \Sigma^+$.
Let $\textsf{del}(\alpha, u)$ denote the string obtained by deleting the 
prefix of length $\alpha$ of $u$.
Let $\textsf{D}^{\prime}$ denote the standard dynamic 
programming matrix for strings $\textsf{cat}(\textsf{A}^{\prime}, \textsf{A})$ and 
$\textsf{del}(t_2,\textsf{B})$, where $|\textsf{A}^{\prime}|=t_1$.

\begin{theorem}[\cite{Hsu:2009:FAG:1696924.1697033}]
The $0$-wave, $1$-wave, $\ldots$ , and $k$-wave of matrix $\textsf{D}^{\prime}$ can be computed in time $\cO((t_1 
+ t_2)k)$.
\label{the:Hsu}
\end{theorem}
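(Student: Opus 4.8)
The plan is to establish the bound incrementally, treating $\textsf{D}'$ not as a matrix to be filled from scratch but as a controlled perturbation of the dynamic programming matrix $\textsf{D}$ of $\textsf{A}$ and $\textsf{B}$, whose waves $\textsf{H}_0,\ldots,\textsf{H}_k$ I assume are already in hand (this is the incremental setting: a bound depending only on $t_1,t_2,k$ and not on $|\textsf{A}|,|\textsf{B}|$ is possible only if the previous waves are given). The transition from $(\textsf{A},\textsf{B})$ to $(\textsf{cat}(\textsf{A}',\textsf{A}),\textsf{del}(t_2,\textsf{B}))$ factors into $t_2$ elementary operations that delete the leading letter of the second string and $t_1$ elementary operations that prepend a single letter of $\textsf{A}'$ to the first string; the order is immaterial. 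It therefore suffices to show that each elementary operation updates the whole collection of waves in amortised time $\cO(k)$, since summing over the $t_1+t_2$ operations gives the claimed $\cO((t_1+t_2)k)$.

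First I would fix the machinery that makes individual wave extensions cheap: a generalised suffix tree over $\textsf{cat}(\textsf{A}',\textsf{A})$, $\textsf{B}$, and their reverses, augmented with a constant-time lowest-common-ancestor index, so that any longest-common-extension (LCE) query is answered in $\cO(1)$. This structure is built once and is not charged to the per-operation cost. With constant-time extensions available, I would exploit the monotonicity of the edit-distance matrix -- values increase by at most one along each diagonal and differ by at most one between neighbouring diagonals -- to express each furthest-reaching cell of value $h$ (the cell with $\textsf{D}'[i,j]=h$ and $\textsf{D}'[i+1,j+1]=h+1$ that contributes an entry to $\textsf{H}_h$) through a recurrence on the furthest-reaching cells of value $h-1$ on the same and the two neighbouring diagonals, closed by a single maximal diagonal run resolved with one LCE query. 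Because an elementary prepend or prefix deletion perturbs the matrix only near its left boundary, I would then propagate the change wave by wave, recomputing only those furthest-reaching cells that actually move and copying the remaining entries unchanged from the old waves.

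The subtle point, and the step I expect to be the main obstacle, is the amortisation that keeps the cost of a single operation at $\cO(k)$ for all $k+1$ waves together rather than $\cO(k)$ per wave. The danger is that each of the $\cO(k)$ waves could in principle have $\cO(k)$ moving cells, giving $\cO(k^2)$ per operation; ruling this out requires the monotonicity invariants of~\cite{Hsu:2009:FAG:1696924.1697033}, namely that the furthest-reaching positions are monotone between consecutive waves and that under each operation they only advance, never retreat. Granting these invariants, every unit of diagonal extension performed during the update can be charged to a distinct matrix cell that is visited only once over the whole operation, so the total movement -- not merely the count of waves touched -- is $\cO(k)$ per operation. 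Establishing these invariants rigorously (in particular that a single left-boundary change cannot cascade arbitrarily far across the waves) is the crux; once they are in place, the $\cO((t_1+t_2)k)$ bound follows by summation over the elementary operations.
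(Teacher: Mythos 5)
The paper does not actually prove this statement: Theorem~\ref{the:Hsu} is imported verbatim from the cited work of Hsu, Chen and Chao and is used as a black box inside Lemmas~\ref{lem:ver} and~\ref{lemm:pre}. So there is no in-paper proof to compare yours against; what can be judged is whether your reconstruction matches the argument in the cited source and whether it stands on its own. Strategically it does match: you correctly read the implicit precondition that the waves of the original matrix (and a constant-time LCE structure) are already available, you decompose the transformation into $t_1$ single-letter prepends and $t_2$ single-letter prefix deletions, and you reduce the theorem to the claim that one elementary operation updates all $k+1$ waves in $\cO(k)$ total time. That is exactly the shape of the argument of Landau et al.~\cite{LMS98} and of its generalisation in~\cite{Hsu:2009:FAG:1696924.1697033}.

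As a proof, however, your proposal has a genuine gap, and you name it yourself: the entire content of the theorem is that a single elementary operation costs $\cO(k)$ for all waves together rather than $\cO(k)$ per wave, and you only assert the monotonicity and advancement invariants that would make the charging argument work, without proving them or stating them precisely enough to be checked. In particular, the claim that ``a single left-boundary change cannot cascade arbitrarily far across the waves'' does not follow from the monotonicity of the edit-distance matrix alone --- it is a structural property of how the furthest-reaching cells of the old and new matrices interleave, and it is precisely the technical heart of~\cite{LMS98} and~\cite{Hsu:2009:FAG:1696924.1697033}. Until those invariants are stated and established, the $\cO(k)$-per-operation bound, and hence the theorem, does not follow; the remaining ingredients of your outline (the suffix-tree/LCA machinery for $\cO(1)$ LCE queries, the wave recurrence, and the summation over the $t_1+t_2$ operations) are routine once that lemma is in hand.
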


If a window of the text triggers a verification then we have a window of length $m-k$ such that there exist some $q$-grams of the window that occur in $x$ or its rotations with at most $k$ differences in total. When we verify a window, we check for occurrences of pattern $x$ starting at every position in the window. For each position, we may have a factor of length at most $m+k$ representing an occurrence, meaning we must consider a factor $w$ of the text of length $2m$ which we refer to as a \textit{block}. This ensures we avoid missing any occurrences at the $m-k$ starting positions as $(m-k) + (m+k)=2m$.

For each possible starting position $i$, $0 \leq i < m-k$, we compute the $0$-wave, $1$-wave, $\dots$ , and $k$-wave for $x$ 
and $w'=w[i \dd 2m-1]$, the suffix of $w$ starting at position $i$. To check if we have an occurrence, we must check the $k$-wave $\textsf{H}_k$. 
We iterate through each entry in the $k$-wave $\textsf{H}_k$; and if $\textsf{H}_k$ has missing entries or contains entries on the last row of the matrix, then $x$ occurs in $w$ with at most $k$ differences. 

Similarly we can check for the occurrences of the rotations of $x$ using the incremental string comparison techniques.
We are now ready to outline the verification scheme, denoted by function $\textsf{VER}$. Given the pattern $x$ of length $m$, an integer threshold $k<m$, and a block $w$
of length $2m$ of the text $t$, function $\textsf{VER}$ finds all factors $u$ of $w$ such that $u \equiv_k^E x^i$, $0 \leq i < m$.
If any diagonal has no entry on the $k$-wave then that diagonal reached the last row of the matrix with less than $k$ differences; this means $x$ occurs in $w$ with less than $k$ differences.
  \begin{myfunction}[H]{10 cm}
  \Algo{$\textsf{VER}(x,m,k,w,2m)$}{
      Compute the edit distance between $x$ and $w'=w[0\dd 2m-1]$ with at most $k$ differences using the standard dynamic programming algorithm\;
      Check for any occurrences using $\textsf{D}$, and if found, {\bf report} an occurrence at position 0\;
      \ForEach{$i\in \{1,m-k-1\}$}{
       \ForEach{$j\in \{1,m\}$}{
       Construct rotation $x^{j}$ of $x$ by removing the first letter of $x^{j-1}$ and appending it to the end of $x^{j-1}$\; 
       Compute the edit distance between $x^{j}$ and $w'=w[i\dd 2m-1]$ using the incremental string comparison techniques\;
       Check for any occurrences using $\textsf{H}_k$, and if found, {\bf report} an occurrence at the current position $i$ being checked\;
       }
     }
  }
  \end{myfunction}
\begin{lemma}
Given the pattern $x$ of length $m$, an integer threshold $k<m$, and string $w$ of length $2m$, function $\textsf{VER}$ requires time $\cO(m^2k)$.
\label{lem:ver}
\end{lemma}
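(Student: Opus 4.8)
The plan is to bound the running time of $\textsf{VER}$ by separately accounting for its initial dynamic-programming computation and for the work done inside the nested loops. First I would handle the initial step: computing the edit distance between $x$ (of length $m$) and $w' = w[0\dd 2m-1]$ under the threshold $k$ with the standard dynamic programming algorithm takes $\cO(mk)$ time, by the bounded-threshold bound recalled in the introduction, and inspecting $\textsf{D}$ for occurrences is absorbed into this cost. This term is dominated by the loops, so it will not affect the final bound.

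Next I would count the work performed inside the nested loops. The outer loop ranges over $\cO(m)$ starting positions $i$ and the inner loop over the $m$ rotations $x^1,\ldots,x^m$, giving $\cO(m^2)$ iterations in total; the crux is to show that, once the waves for one rotation are available, deriving those for the next costs only $\cO(k)$, rather than the $\cO(mk)$ that a from-scratch computation would need. I would argue this in three parts. First, constructing $x^j$ from $x^{j-1}$ only removes the leading letter and appends it, which is a constant-size edit (and $\cO(1)$ under a suitable representation). Second, since $x^j$ therefore differs from $x^{j-1}$ by a constant number of delete and append (and prepend) operations, Theorem~\ref{the:Hsu} derives the $0$-wave through $k$-wave of the new matrix in time $\cO((t_1+t_2)k)=\cO(k)$, reusing the waves already held for $x^{j-1}$. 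Third, checking $\textsf{H}_k$ for missing entries or entries on the last row inspects a list of length $\cO(k)$, so it too costs $\cO(k)$.

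It then remains to account for re-initialisation across starting positions. Whenever the outer loop advances to a new $i$, the incremental chain of rotations must be seeded by computing the waves of the first rotation against $w[i\dd 2m-1]$ from scratch, at cost $\cO(mk)$; summed over the $\cO(m)$ positions this contributes $\cO(m^2 k)$. Combining all contributions---$\cO(mk)$ for the initial step, $\cO(m^2 k)$ for the $\cO(m^2)$ incremental iterations, and $\cO(m^2 k)$ for the re-initialisations---yields the claimed total of $\cO(m^2 k)$.

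I expect the main obstacle to be the careful justification of the second part above: that one rotation of the pattern genuinely corresponds to only a constant number of the prepend, delete, and append operations supported by the incremental framework, so that the per-rotation cost is $\cO(k)$ and independent of $m$. Everything else is routine accounting; but without this reuse the nested loops would cost $\cO(m^3 k)$, so it is precisely the incremental string comparison of Theorem~\ref{the:Hsu} that secures the bound.
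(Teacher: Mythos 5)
Your proposal is correct and follows essentially the same route as the paper: an $\cO(mk)$ initial dynamic-programming step, then $\cO(m)$ starting positions times $m$ rotations, with Theorem~\ref{the:Hsu} giving $\cO(k)$ per incremental rotation update (i.e.\ $\cO(mk)$ per position), for $\cO(m^2k)$ overall. You are merely more explicit than the paper about the re-seeding cost at each new starting position and about why a rotation is a constant number of supported incremental operations---a point the paper itself leaves implicit.
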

\begin{proof}
Computing the edit distance between $x$ and $w[0\dd 2m-1]$ with at most $k$ differences takes time $\cO(mk)$ using the standard dynamic programming algorithm.
By Theorem~\ref{the:Hsu}, computing the edit distance between all the rotations of the pattern and $w[i\dd 2m-1]$ for a single position in $w$ requires $\cO(mk)$; 
and there are $\cO(m)$ positions in $w$. In total, the time is $\cO(mk + m^2k)$, that is $\cO(m^2k)$. \qed
\end{proof}
\subsection{Searching scheme}
The searching scheme of the presented algorithm requires the preprocessing and indexing of the pattern $x$. 
We first present the preprocessing required and then present the searching technique itself.
\subsubsection*{Preprocessing.}
We build a $q$-gram index in a similar way as that proposed by Chang and Marr in~\cite{Chang}. Intuitively, we wish to determine the minimum possible edit distance between every $q$-gram and any factor of $x$ or its rotations. Equivalently we find the minimum possible edit distance between every $q$-gram and any {\em prefix} of a factor of length $2q$ of $x$ and the suffixes of length 1 to $2q$ of $x$ or its rotations. An index like this allows us to lower bound the edit distance between a window of the text and $x$ or its rotations without computing the edit distance between them. To build this index, we generate every string of length $q$ on $\Sigma$, and find the minimum edit distance between it and all prefixes of factors of length $2q$ of $x$ or its rotations. This information can easily be stored by generating a numerical representation of the $q$-gram and storing the minimum edit distance in an array at this location. If we know the numerical representation, we can then look up any entry in constant 
time. 

We determine the edit distance using the preprocessing scheme, denoted by function $\textsf{PRE}$, which is 
similar to the verification scheme (function $\textsf{VER}$). 

Given the string $x'=x[0\dd m-1]x[0 \dd m-2]$ of length $2m-1$, function $\textsf{PRE}$ finds the minimum edit distance between every $q$-gram on $\Sigma$, generated in increasing order, and any factor $u$ of length $2q$ of $x'$ and its suffixes of length 1 to $2q$.

\begin{lemma}\label{lemm:pre}
Given the string $x'=x[0\dd m-1]x[0 \dd m-2]$ of length $2m-1$ on $\Sigma$, $\sigma = |\Sigma|$, and $q < m$, 
function  $\textsf{PRE}$ requires time $\cO(\sigma^qmq)$ and space $\cO(\sigma^q)$.
\end{lemma}

\begin{proof}
The time required for initialising array $\textsf{M}$ is $\cO(\sigma^q)$.
The time required for computing the edit distance between $x'[0\dd 2q-1]$ and $s$ is $\cO(q^2)$ using the standard dynamic programming algorithm. 
By Theorem~\ref{the:Hsu}, computing the edit distance between all $2q$-grams of $x'$ and $s$ requires time $\cO(mq)$.
There exist $\cO(\sigma^q)$ possible $q$-grams on $\Sigma$ and so, in total, the time complexity is $\cO(\sigma^qmq)$. Keeping array $\textsf{M}$ in memory requires space $\cO(\sigma^q)$. \qed
\end{proof}

  \begin{myfunction}[H]{11 cm}
  \Algo{$\textsf{PRE}(x',2m-1,q,\sigma)$}{
      $\textsf{M}[0\dd \sigma^q -1]\leftarrow 0$\;
      $j \leftarrow 0$\;
      \ForEach{$s \in \Sigma^{q}$}{
      Compute the edit distance between $u=x'[0\dd 2q-1]$ and $s$ using the standard dynamic programming algorithm. Set $E_{\min}$ equal to the minimum edit distance between $s$ and any prefix of $u$ using $\textsf{D}$\;
       \ForEach{$i\in \{1,2m-q-1\}$}{
       $u \leftarrow x'[i \dd \textsf{min} \left\lbrace i + 2q-1, 2m-2 \right\rbrace ]$\;
       Compute the edit distance $E'$ between $u$ and $s$ using the incremental string comparison techniques. Set $E'$ equal to the minimum edit distance between $s$ and any prefix of $u$ using $\textsf{H}_{q}$\;
       \lIf{$E' < E_{\min}$}{$E_{\min} \leftarrow E'$}
     }
       $\textsf{M}[j]\leftarrow E_{\min}$\;
       $j \leftarrow j+1$\;
    }
    \Return{$\textsf{M}$}\;
  }
  \end{myfunction}

\subsubsection*{Searching.}
In the search phase we wish to read backwards enough $q$-grams from a window of size $m$ that the probability we must verify the window is small and the amount we can shift the window by is sufficiently large.
We now recall some important lemmas from~\cite{Chang} that we will use in the analysis of our algorithm.

\begin{lemma}[\cite{Chang}]\label{lem:prob}
The probability that two $q$-grams on $\Sigma$, one being uniformly random, have a common subsequence of length $(1-c)q$ is 
at most $\frac{a\sigma^{-dq}}{q}$, where $a = (1+o(1))/(2\pi c(1-c))$ and $d = 1 - c + 2c \log_\sigma c + 2(1 - c) \log_\sigma (1 - c)$. The probability decreases exponentially for $d>0$, which holds if $c < 1- \frac{e}{\sqrt{\sigma}}$.
\label{lem:Chang1}
\end{lemma}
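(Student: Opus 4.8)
The plan is to estimate the probability that a uniformly random $q$-gram and a fixed $q$-gram share a common subsequence of length $(1-c)q$, by a direct counting/union-bound argument over the possible alignments, and then apply Stirling's approximation to the resulting binomial coefficients to extract the stated constants $a$ and $d$. This is essentially the computation from Chang and Marr~\cite{Chang}, so my job is to reconstruct it rather than invent it.

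**First I would** set $\ell = (1-c)q$ and count the number of ways a common subsequence of length $\ell$ can be \emph{witnessed}: choose the $\ell$ positions in the fixed $q$-gram and the $\ell$ positions in the random $q$-gram that participate in the subsequence, giving $\binom{q}{\ell}^2$ choices of position-pairs (or $\binom{q}{\ell}$ if one fixes the positions in the deterministic string and only varies the random one — the exact bookkeeping determines whether we get one or two binomial factors, and this is where I must be careful). For each such fixed pattern of matching positions, the probability that the random $q$-gram happens to agree with the fixed one at all $\ell$ chosen positions is $\sigma^{-\ell} = \sigma^{-(1-c)q}$, since each of the $\ell$ letters must independently hit a prescribed symbol. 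A union bound over all position-choices then yields
\[
\Pr[\text{common subsequence of length }(1-c)q] \;\le\; \binom{q}{(1-c)q}^{2}\,\sigma^{-(1-c)q}.
\]

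**Next I would** apply Stirling's formula in the form $\binom{q}{cq} = (1+o(1))\,\bigl(2\pi c(1-c)q\bigr)^{-1/2}\,c^{-cq}(1-c)^{-(1-c)q}$, using $\binom{q}{(1-c)q}=\binom{q}{cq}$. Squaring this bound contributes the factor $\bigl(2\pi c(1-c)q\bigr)^{-1}$, which together with the $(1+o(1))$ term produces the claimed prefactor $a/q$ with $a=(1+o(1))/(2\pi c(1-c))$. Collecting the exponential parts, the base of the exponent in $\sigma$ is
\[
(1-c) \;+\; 2c\log_\sigma c \;+\; 2(1-c)\log_\sigma(1-c),
\]
which is exactly $d$; converting the $c^{-cq}(1-c)^{-(1-c)q}$ factors into powers of $\sigma$ via $c^{-cq}=\sigma^{-cq\log_\sigma c}$ is the step that generates the logarithmic terms. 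This gives the bound $\frac{a\sigma^{-dq}}{q}$, decreasing exponentially precisely when $d>0$.

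**The main obstacle** I expect is twofold: first, getting the exponent of the binomial coefficient right (one factor versus two) so that the coefficient of $2$ in front of the two $\log_\sigma$ terms in $d$ emerges correctly — an off-by-a-factor-of-two error here would change $d$ entirely. Second, verifying the threshold condition $d>0 \iff c<1-\frac{e}{\sqrt{\sigma}}$; I would do this by examining the behaviour of $d(c)$ near the relevant regime, checking that $d$ is positive for small $c$ and locating the root, where the $e$ and $\sqrt{\sigma}$ arise from the leading-order solution of $d(c)=0$ after substituting the natural-log form of the logarithms. Since this lemma is quoted verbatim from~\cite{Chang}, I would ultimately defer the detailed asymptotics to that reference and present only the skeleton of the argument above.
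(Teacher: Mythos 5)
Your reconstruction is correct: the paper itself gives no proof of this lemma (it is imported verbatim from Chang and Marr~\cite{Chang}), and your union bound $\binom{q}{(1-c)q}^{2}\sigma^{-(1-c)q}$ with \emph{two} binomial factors is the right bookkeeping, since squaring the Stirling estimate is exactly what produces the prefactor $1/(2\pi c(1-c)q)$ and the coefficient $2$ on both $\log_\sigma$ terms in $d$, while the threshold $c<1-e/\sqrt{\sigma}$ falls out of the leading-order root of $d(c)=0$ near $c=1$ as you indicate. Nothing further to compare against, as the paper defers entirely to the cited reference.
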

\begin{lemma}[\cite{Chang}]
If $s$ is a $q$-gram occurring with less than $cq$ differences in a given string $u$, $|u|\geq  q$, $s$ has a common subsequence of length $q-cq$ with some $q$-gram of $u$.
\label{lem:Chang2}
\end{lemma}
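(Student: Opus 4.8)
The plan is to unfold the definition of ``occurring with at most $k$ differences'' given in the preliminaries, analyse an optimal alignment to extract a long common subsequence, and then confine that subsequence to a single $q$-gram of $u$.

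First I would observe that, by the definition of occurrence with differences, $s$ occurring with fewer than $cq$ differences in $u$ means there is a factor $v$ of $u$ with $\delta_E(s,v) < cq$. Fix an optimal alignment of $s$ and $v$ and classify each position of $v$ as a \emph{match}, a \emph{substitution}, or an \emph{insertion}, and each position of $s$ as a match, a substitution, or a \emph{deletion}. Writing $M,S,I,D$ for the respective counts, we have $M+S+D=q$, while the edit distance equals $S+I+D<cq$. Hence the number of matches satisfies $M=q-S-D\geq q-(S+I+D)>q-cq$, and the matched positions constitute a common subsequence of $s$ and $v$ of length $M>q-cq$.

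I expect the crux to be that $v$ need not be a $q$-gram: since up to $I<cq$ insertions may be interspersed among the matches, the matched positions of $v$ can span a range longer than $q$, so the subsequence is not yet witnessed inside a length-$q$ factor of $u$. To handle this, I would let $p$ be the absolute position in $u$ of the leftmost matched character and consider the candidate $q$-gram $g=u[p\dd p+q-1]$. A counting argument then bounds the matches falling outside $g$: the span of the matched positions of $v$ contains only matches, substitutions, and insertions, hence has length at most $M+S+I$, so at most $(M+S+I)-q$ matches lie beyond column $p+q-1$. Consequently the number of matches inside $g$ is at least $M-\bigl((M+S+I)-q\bigr)=q-S-I\geq q-(S+I+D)>q-cq$, and these matches form a common subsequence of $s$ and $g$ of length exceeding $q-cq$, which is the desired bound.

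Finally I would dispose of the boundary case in which $g$ runs past the end of $u$, that is $p>|u|-q$: then every matched position lies in the last $q$-gram $u[\,|u|-q\dd|u|-1\,]$, which exists because $|u|\geq q$ and which therefore already contains all $M>q-cq$ matches. In either case some $q$-gram of $u$ shares a common subsequence of length at least $q-cq$ with $s$, completing the argument; the counting of alignment operations in the earlier steps is routine, so the windowing estimate is the only genuine obstacle.
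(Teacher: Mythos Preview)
Your argument is correct. The paper itself does not prove this lemma at all: it is quoted verbatim from Chang and Marr~\cite{Chang} and used as a black box, so there is no ``paper's proof'' to compare against. What you have written is the standard justification one would supply: extract from an optimal alignment a common subsequence of length $M=q-S-D>q-cq$ between $s$ and the matching factor $v$, and then window down to a single $q$-gram of $u$ by observing that the matched positions of $v$ span at most $M+S+I$ columns, so at most $M+S+I-q$ of them can fall outside any length-$q$ window anchored at the first match, leaving at least $q-S-I>q-cq$ inside. The boundary handling via the last $q$-gram when $p>|u|-q$ is also fine. One small cosmetic point: you repeatedly obtain a common subsequence of length \emph{strictly greater} than $q-cq$, which of course implies the stated bound of length $q-cq$; you may want to phrase the conclusion as ``at least $q-cq$'' to match the lemma exactly.
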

By Lemmas~\ref{lem:Chang1} and~\ref{lem:Chang2}, we know that the probability of a random $q$-gram occurring in a string of length $m$ with less than $cq$ differences is no more than $ma\sigma^{-dq}/q$ as we have $m-q+1$ $q$-grams in the string. For circular string matching this is not sufficient. To ensure that we have the $q$-grams of all possible rotations of pattern $x$, we instead consider the string $x'=x[0\dd m-1]x[0 \dd m-2]$ and extract the $q$-grams from $x^{\prime}$. We may have up to $2m-q$ $q$-grams, but to simplify the analysis we assume we have $2m$ and so the probability becomes $2ma\sigma^{-dq}/q$. 

In the case when we read $k/(cq)$ $q$-grams, we know that with probability at most $(k/(cq))2ma\sigma^{-dq}/q$ we have found less than $k$ differences. This does not permit us to discard the window if all $q$-grams occur with at most $cq$ differences. To fix this, we instead read $1 + k/(cq)$ $q$-grams. If any $q$-gram occurs with less than $cq$ differences, we will need to verify the window; but if they all occur with at least $cq$ differences, we must exceed the threshold $k$ and can shift the window. When shifting the window we have the case that we shift after verifying the window and the case that the differences exceed $k$ so we do not verify the window. If we have verified the window, we can shift past the last position we checked for an occurrence: we can shift by $m-k$ positions. If we have not verified the window, as we read a fixed number of $q$-grams, we know the minimum-length shift we can make is one position past this point. The length of this shift is at least $m-k-(q+k/c)$ positions. This 
means 
we will have at most $\frac{n}{m-k-(q+k/c)}= \cO(\frac{n}{m})$ windows. The previous statement is only true assuming $m-q>k+k/c$, as then the denominator is positive. From there we see that we also have the condition that $q + k + k/c$ can be at most $\epsilon m$, 
where $\epsilon < 1$, so the denominator will be $\cO(m)$. This puts a slightly stricter condition on $c$, that is, $c> \frac{k}{\epsilon m -q -k}$.

We can see that, for each window, we verify with probability at most $(1 + k/(cq))2ma\sigma^{-dq}/q$, where $a = (1+o(1))/(2\pi c(1-c))$ 
and $d = 1 - c + 2c \log_\sigma c + 2(1 - c) \log_\sigma (1 - c)$. So the probability that a verification is triggered is

$$\frac{(1+k/(cq))2ma\sigma^{-dq}}{q}.$$

\noindent Because by Lemma~\ref{lem:ver}, verification takes time $\cO(m^2k)$, then per window, the expected cost is

$$\frac{(1+k/(cq))2ma\sigma^{-dq}\cO(m^2k)}{q} = \cO(\frac{(q+k)m^3ka\sigma^{-dq}}{q^2}).$$

\noindent We wish to ensure that the probability of verifying a window is small enough that the average work done is no more than the work we must do if we skip a window without verification. When we do not verify a window, we read $1 + k/(cq)$ $q$-grams and shift the window. This means that we read $q + k/c = \cO(q + k)$ letters. So a sufficient {\em condition} is the following:

$$\frac{(q+k)m^3ka\sigma^{-dq}}{q^2} = \cO(q + k).$$

\noindent Or equivalently the below expression, where $f$ is the constant of proportionality:

$$\frac{(q+k)m^3ka\sigma^{-dq}}{q^2} \leq f(q + k).$$

\noindent By rearranging and setting $f = \sigma$ we get the condition on the value of $q$ below:

$$q \geq \frac{3\log_\sigma m + \log_\sigma k + \log_\sigma a - 2\log_\sigma q}{d}.$$

\noindent From the condition on $q$ we can see that it is sufficient to pick $q = \Theta(\log_\sigma km)$, so asymptotically on $m$ we get the following:

$$q \geq \frac{3\log_\sigma m + \log_\sigma k - \cO(\log_\sigma \log_\sigma km)}{d}.$$

\noindent Therefore, for sufficiently large $m$, the below condition is sufficient for optimality, where $d=1 - c + 2c \log_\sigma c + 2(1 - c) \log_\sigma (1 - c)$:

$$q = \frac{3 \log_\sigma m + \log_\sigma k}{d}.$$

\noindent For this analysis to hold we must be able to read the required number of $q$-grams to ensure the probability of verifying a window is small enough to negate the work of doing it. Note that the above probability is the probability that at least one of $q$-grams match with less than $cq$ differences. To ensure we have enough unread random $q$-grams in the window for Lemma~\ref{lem:Chang2} to hold in the above analysis the window must be of size $m-k \geq 2q + 2k/c$. Now we consider the case where $2q + 2k/c > m-k \geq 2q + k/c$. If we have just verified a window then we have enough new random $q$-grams and our analysis holds. If we have just shifted then we know that all the $q$-grams we previously read matched with at least $cq$ differences and we have between 1 and $k/qc$ $q$-grams and the probability that one of these matches with less than $cq$ difference is less than in the analysis above so it holds.

The condition $m-k \geq 2q + k/c$ implies a condition on $c$, it must be the case that $c \geq \frac{k}{m-k-2q}$. This condition on $c$ is weaker than our previous condition on $c$, so to determine the {\em error ratio} $\frac{k}{m}$, we use the stronger condition. Additionally, from Lemma \ref{lem:prob}, we know that $c < 1 - \frac{e}{\sqrt{\sigma}}$.  So we must pick a value for $c$ subject to $ \frac{k}{\epsilon m-k-q} \leq c < 1- \frac{e}{\sqrt{\sigma}}$.
This inequality implies a limit on the error ratio for which our algorithm is optimal. Clearly it must be the case that $\frac{k}{\epsilon m-k-q} <1- \frac{e}{\sqrt{\sigma}}$ for $\epsilon <1$. 
Rearranging the inequality implies the following sufficient condition on our error ratio: 

$$\frac{2k}{m} < \epsilon - \frac{q}{m} - \frac{\epsilon e}{\sqrt{\sigma}} + \frac{qe}{m \sqrt{\sigma}} + \frac{ke}{m\sqrt{\sigma}}.$$

\noindent From here we can factorise and divide everything by 2 to get the following:

$$\frac{k}{m} < \frac{\epsilon}{2} - \frac{q}{2m} - \frac{e}{2\sqrt{\sigma}}(\epsilon - \frac{q}{m} - \frac{k}{m}).$$

\noindent So asymptotically on $m$ we have:

$$ \frac{k}{m} < \frac{\epsilon}{2}- \cO(\frac{1}{\sqrt{\sigma}}).$$

\noindent Note that this technique can work for any ratio which satisfies $\frac{k}{m} < \frac{1}{2}- \cO(\frac{1}{\sqrt{\sigma}})$.
For any ratio below this, pick a large enough value for $\epsilon$ such that asymptotically on $m$ the algorithm will work in the claimed search time.
By choosing a suitable value for $c$ and $q \geq \frac{3 \log_\sigma m + \log_\sigma k}{d}$ we obtain the following result.
%

\begin{theorem}
The problem \ACSM\ can be solved in optimal average-case search time $\cO(n(k+ \log_\sigma m)/m)$.
\end{theorem}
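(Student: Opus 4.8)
The plan is to prove the final theorem by combining the two schemes---searching and verification---and showing that their total expected cost over all windows meets the claimed bound, while separately confirming that the preprocessing cost does not dominate. First I would establish the \emph{correctness} of the algorithm: I would argue that no valid occurrence is missed. The key observations are that every rotation $x^i$ is a factor of $x' = x[0\dd m-1]x[0\dd m-2]$, so extracting $q$-grams from $x'$ captures the $q$-grams of all rotations; and that by Lemmas~\ref{lem:Chang1} and~\ref{lem:Chang2}, if every one of the $1 + k/(cq)$ $q$-grams read backwards from a window occurs with at least $cq$ differences, then the total number of differences exceeds $k$, so the window genuinely contains no occurrence and may safely be shifted. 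The verification scheme $\textsf{VER}$, invoked whenever the window is not safely discarded, is exact and exhaustive over all $m-k$ starting positions and all $m$ rotations, so whenever a true occurrence lies in a block it is reported.

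Next I would handle the \emph{search-time} analysis, which is the heart of the argument. The strategy is to bound the expected work per window and multiply by the number of windows $\cO(n/m)$. For each window the algorithm reads $q + k/c = \cO(q+k)$ letters of deterministic work, and triggers a verification with probability at most $(1 + k/(cq))\,2ma\sigma^{-dq}/q$; since verification costs $\cO(m^2k)$ by Lemma~\ref{lem:ver}, the expected verification cost per window is
$$
\frac{(1+k/(cq))\,2ma\sigma^{-dq}\cO(m^2k)}{q}.
$$
The crucial step is choosing $q = (3\log_\sigma m + \log_\sigma k)/d$ so that this expected verification cost is $\cO(q+k)$, i.e.\ asymptotically no larger than the deterministic reading cost. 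With that choice the per-window cost is $\cO(q+k) = \cO(\log_\sigma m + \log_\sigma k + k) = \cO(k + \log_\sigma m)$, and summing over $\cO(n/m)$ windows yields the search time $\cO(n(k+\log_\sigma m)/m)$.

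Two feasibility conditions must then be verified so that the analysis is not vacuous. The window must be large enough to contain the required number of fresh random $q$-grams, which forces $m - k \geq 2q + k/c$ (handling the boundary regime $2q + 2k/c > m-k \geq 2q + k/c$ separately, as sketched, by noting the match probability after a shift is only smaller); and Lemma~\ref{lem:Chang1} requires $d > 0$, i.e.\ $c < 1 - e/\sqrt{\sigma}$. Combining these gives the admissible range $\frac{k}{\epsilon m - k - q} \leq c < 1 - e/\sqrt{\sigma}$, which is non-empty precisely when the error ratio satisfies $\frac{k}{m} < \frac{1}{2} - \cO(1/\sqrt{\sigma})$. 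I would then close by noting that $q = \Theta(\log_\sigma km)$, so by Lemma~\ref{lemm:pre} the preprocessing time $\cO(\sigma^q m q)$ and space $\cO(\sigma^q)$ are polynomial in $m$ and $k$ and dominated by the overall complexity, and that the lower bound $\Omega(n(k+\log_\sigma m)/m)$ quoted in the introduction certifies optimality.

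I expect the main obstacle to be the search-time step, specifically pinning down the value of $q$ that balances expected verification cost against deterministic reading cost while simultaneously respecting the window-size and $d>0$ constraints; getting all three conditions to hold at once is what determines the $\frac{1}{2} - \cO(1/\sqrt{\sigma})$ limit on the error ratio and is the delicate part of the proof.
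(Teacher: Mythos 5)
Your proposal follows essentially the same route as the paper: bound the verification probability via Lemmas~\ref{lem:Chang1} and~\ref{lem:Chang2} applied to the doubled string $x'$, balance the expected $\cO(m^2k)$ verification cost (Lemma~\ref{lem:ver}) against the $\cO(q+k)$ reading cost by choosing $q = (3\log_\sigma m + \log_\sigma k)/d$, multiply by $\cO(n/m)$ windows, and check the same feasibility constraints on $c$ and the window size that yield the $\frac{1}{2}-\cO(1/\sqrt{\sigma})$ error-ratio limit. The argument is correct and matches the paper's analysis; your added remarks on correctness and on preprocessing being excluded from the search-time bound are consistent with (though more explicit than) what the paper states.
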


\section{Comparison with Existing Algorithms}

To the best of our knowledge, the only other algorithms to achieve optimal average-case search time for approximate circular string matching are the algorithms presented in~\cite{Fredriksson:2004:ASM:1005813.1041513} for multiple approximate string matching. In the analysis of the algorithms in~\cite{Fredriksson:2004:ASM:1005813.1041513} it is assumed that all patterns are random. In~\cite{doi:10.1142/S0129054106004455} the authors re-analyse their algorithms for the problem of circular string matching with the same preprocessing and space costs. 
In this section, we analyse these results and compare them with our own. 
We refer to the algorithm presented in Section~\ref{sec:algo} as $\textsf{BIP}$. Due to the constant $c$ in the value of $q$ from Lemma \ref{lem:Chang1}, the exact preprocessing and space costs for these algorithms depend on the chosen value for $c$. It is however possible to determine the minimum savings we make based on the value of $q$ used in all algorithms.

Applying the algorithms in~\cite{Fredriksson:2004:ASM:1005813.1041513} to approximate circular string matching requires a reduction to multiple approximate string matching for matching the $m$ rotations of $x$. The first algorithm in~\cite{Fredriksson:2004:ASM:1005813.1041513} has the following time complexity:

$$\cO(n(k + \log_\sigma rm) /m).$$

\noindent By setting $r=m$ this matches our search time and the result is valid when $k/m<1/2-\cO(1/\sqrt{\sigma})$, $r= \cO(\min(n^{1/3} /m^2, \sigma^{o(m)}))$, and we have $\cO(\sigma^q)$ space available, where $q$ is subject to the constraint: 

$$q \geq \frac{4 \log_\sigma m + 2 \log_\sigma r}{d}.$$

\noindent Again by setting $r=m$ this becomes
$q \geq \frac{6 \log_\sigma m}{d}$ and the preprocessing time is $\cO(\sigma^q m^2)$. We will refer to this algorithm as $\textsf{FN1}$. The second algorithm, presented in~\cite{Fredriksson:2004:ASM:1005813.1041513}, has the same preprocessing cost and requires space $\cO(\sigma^q m)$. We will refer to this algorithm as $\textsf{FN2}$. The important difference between \textsf{FN1} and \textsf{FN2} comes in the condition on $q$ which is slightly lower for \textsf{FN2}:

$$q \geq \frac{3 \log_\sigma m + \log_\sigma r + \log_\sigma (m + \log_2 r)}{d}.$$

\noindent Again, setting $r=m$ this becomes:

$$q \geq \frac{4 \log_\sigma m + \log_\sigma (m + \log_2 m)}{d}.$$

To simplify the comparison between these approaches, we will ignore the factor of $\log_2 m$, and simply say that the value of $q$ for algorithm $\textsf{FN2}$ is greater than or equal to $\frac{5 \log_\sigma m}{d}$. This is lower than the sufficient requirement, so any saving we make using this value must be at least as good or better in reality.

First let us consider \textsf{FN1}. The preprocessing requirement of $\textsf{BIP}$ is $\cO(\sigma^qmq)$, so before any savings made due to the value of $q$ for $\textsf{BIP}$, we have reduced the preprocessing cost by a factor of $\cO(\frac{m}{q})$. Given the condition on $q$ for $\textsf{BIP}$, it is clear that even in the worst case, when $k=\cO(m)$, $\textsf{BIP}$ will make a saving of at least $2 \log_{\sigma} m$ on the value of $q$. This corresponds to an additional saving of $\cO(m^2)$ in preprocessing time bringing the total to $\cO(\frac{m^3}{q})$ and $\cO(m^2)$ in space. 
In the case of $\textsf{FN2}$, we make a saving of at least $\log_{\sigma} m$ on the value of $q$. This corresponds to a total saving of $\cO(\frac{m^2}{q})$ in preprocessing time and $\cO(m^2)$ in space. It should be noted that this is a pessimistic analysis of the savings as we have assumed $k=\cO(m)$ and $d=1$, although it must hold that $d<1$. Note that the standard dynamic programming algorithm can be used with runtime $\cO(m^3)$ for verification and $\cO(\sigma^q m q^2)$ for preprocessing. The speed-ups mentioned in the previous section remain significant as we assumed
that $k=\cO(m)$. We still achieve a preprocessing speed up of at least $\cO(m^2)$ and $\cO(m)$ against \textsf{FN1} and \textsf{FN2}, respectively.
Table~\ref{tab:comparison} corresponds to this analysis. 

\begin{table}[h]
\begin{center}
\caption{Comparison of average-case optimal approximate circular string matching algorithms}
\label{tab:comparison}
\begin{tabular}{|l|l|l|l|l|}
\hline Algorithm & Error Ratio ($k/m$) & Space & Preprocessing Time & Condition on $q$ \\ \hline
\textsf{FN1} & $\frac{1}{2}- \cO(\frac{1}{\sqrt{\sigma}})$ & $\cO(\sigma^q)$ & $\cO(\sigma^q m^2)$ & $\frac{6 \log_\sigma m}{d}$\\ \hline
\textsf{FN2} & $\frac{1}{2}- \cO(\frac{1}{\sqrt{\sigma}})$ & $\cO(\sigma^qm)$ & $\cO(\sigma^qm^2)$ &  $\frac{4 \log_\sigma m + \log_\sigma (m + \log_2 m)}{d}$\\ \hline 
\textsf{BIP} & $\frac{1}{2}- \cO(\frac{1}{\sqrt{\sigma}})$ & $\cO(\sigma^q)$ & $\cO(\sigma^q mq)$ & $\frac{3 \log_\sigma m + \log_\sigma k}{d}$\\ \hline
\end{tabular}
\end{center}
\end{table}

\section{Final Remarks}

In this article, we presented a new average-case optimal algorithm for approximate circular string matching. 
To the best of our knowledge, this algorithm is the first average-case optimal algorithm specifically designed for this problem. 
Other average-case optimal algorithms exist but with higher preprocessing and space requirements than the presented algorithm. 
Additionally the considered problem is solved in a more direct fashion, that is, with no reduction to multiple approximate string matching 
by taking greater advantage of the similarity of the rotations of the pattern. 

\noindent Our immediate target is twofold: 
\begin{itemize}
\item first, we plan on tackling the problem of multiple approximate circular string matching. We will try to generalise the approach we have taken here to see if it leads to an average-case optimal 
algorithm in this case. 
\item second, we plan on implementing the presented algorithm. We will then compare the respective implementation to other average- and worst-case approaches.
\end{itemize}

\bibliographystyle{splncs03}
\bibliography{references}
\end{document}